\newtheorem{theorem}{Theorem}
\newtheorem{lemma}{Lemma}
\newtheorem{proposition}{Proposition}
\definecolor{darkgreen}{rgb}{0.1,0.5,.5}
\definecolor{darkred}{rgb}{0.8,0,0}
\definecolor{teal}{rgb}{0.05,0.32,0.41}
\definecolor{darkblue}{rgb}{0,0.0,0.5}
\definecolor{blackgreen}{rgb}{0,0.4,0}
\definecolor{purple}{rgb}{0.5,0,0.3}
\definecolor{grey}{rgb}{0.7,0.5,0.5}
\definecolor{orange}{rgb}{0.6,0.4,0.1}
\begin{document}
%
\title{Uplink Time Scheduling with Power Level Modulation in Wireless Powered Communication Networks}
%
%
\author{Yeongwoo Ko, Sang-Hyo Kim, and Jong-Seon No
\thanks{Y. Ko and J.-S. No are with the Department of Electrical and Computer
Engineering, INMC, Seoul National University, Seoul 08826, Korea, e-mail:
kyw1623@ccl.snu.ac.kr, jsno@snu.ac.kr. S.-H. Kim (corresponding author) is with College of Information and Communication Engineering, Sungkyunkwan University, Suwon 16419, Korea, email: iamshkim@skku.edu.}
}

%
%
%
\markboth{Submission for...}{Regular Paper}



\maketitle

\begin{abstract}

In this paper, we propose downlink signal design and optimal uplink scheduling for the wireless powered communication networks (WPCNs). Prior works give attention to resource allocation in a static channel because users are equipped with only energy receiver and users cannot update varying uplink schedulling. For uplink scheduling, we propose a downlink signal design scheme, called a power level modulation, which conveys uplink scheduling information to users. First, we design a downlink energy signal using power level modulation. Hybrid-access point (H-AP) allocates different power level in each subslot of the downlink energy signal according to channel condition and users optimize their uplink time subslots for signal transmission based on the power levels of their received signals. Further, we formulate the sum throughput maximization problem for the proposed scheme by determining the uplink and downlink time allocation using convex optimization problem. Numerical results confirm that the throughput of the proposed scheme outperforms that of the conventional schemes.

\end{abstract}

\begin{keywords}

Energy harvesting, hybrid-access point (H-AP), resource management, wireless powered communication network (WPCN).

\end{keywords}

\vspace{10pt}
\section{Introduction}

Energy durability has always been an important research topic for wireless communicaiton networks. Many researches have proposed energy saving protocols and transmission schemes for longer battery life. Recently, wireless powered communication networks (WPCNs) have been proposed for wireless communication environment such as sensor networks or internet of things (IoT) \cite{rf1}, \cite{rf2}. Such radio-frequency energy harvesting (RF-EH) has emerged as an alternative solution for prolonging the lifetime of wireless devices.
 
There are two main research categories for RF-EH based wireless communication systems, which are simultaneous wireless information and power transfer (SWIPT) and WPCN. In the SWIPT, wireless energy transfer (WET) and wireless information transmission (WIT) are simultaneously accomplished in the downlink (DL) \cite{swipt1}-\cite{swipt3}, but, there is no downlink data transmission in WPCN. Hybrid-access point (H-AP) broadcasts energy transfer RF signals in the downlink phase and users charge their energy storages such as batteries or supercapacitors from received WET signals. In the uplink (UL) phase, users transmit their WIT signals to the H-AP using harvested energy. 

 A harvest-then-transmit protocol was proposed for an efficient establishment of WPCN in wireless sensor networks \cite{wpcn1}, where the H-AP broadcasts the WET signal in the downlink phase and recovers the message from the WIT signal in the uplink phase. Time division multiple access (TDMA) was adopted in the uplink, where time slots are optimally allocated to each user for maximizing the uplink throughput.
 In \cite{wpcn_1}, a general scheme for wireless powered cellular networks was studied, which incorporates a two-way information transmission together with energy transfer in the downlink from the H-AP to the cellular users.
 Heterogeneous WPCNs were also studied, where EH nodes and non-EH nodes coexist in \cite{wpcn2} and WPCN with full-duplex was considered in \cite{wpcn3}. Most of conventional works focus on optimization with single block time in fixed channel state and the uplink scheduling is not considered in quasi-static channel \cite{wpcn1}-\cite{wpcn3}. For WPCN with energy storage constraint, the optimal downlink power allocation policy was proposed in \cite{wpcn5}, where the transmitted downlink signals have different power levels in the time slots. The H-AP concentrates all available energy in the first few downlink time slots but it is not intended to convey uplink scheduling information to the users. However, in ordinary wireless powered sensor networks, users are not assumed to have a downlink data channnel receiver and thus uplink time slots cannot be dynamically scheduled, which results in inefficient utilization of resources.
 
 In this paper, we focus on how users can dynamically provide uplink scheduling according to quasi-static channel. We propose the power level modulation scheme for donwlink WET signal at H-AP so as to provide uplink scheduling information to the users, where users extract their allocated uplink time slots from the received power level-modulated WET signal. We formulate the sum-rate optimization problem in this situation by the convexity of the problem. The reduced dynamic range is introduced to alleviate the performance loss due to the peak power constraint. Numerical analysis shows that the throughput of the proposed WPCN with power level modulation scheme outperforms that of the conventional schemes in the quasi-static channel condition.

 This paper is organized as follows. In Section II, we describe the system model. First, the proposed downlink signal design is introduced and the optimization problem formulation and its convexity proof follow in Section III. Numerical analysis for the proposed WPCN with power level modulation is given in Section IV. Finally, conclusions are given in Section V.

\vspace{10pt}
\section{System Model} \label{System Model}

\begin{figure}
\centering
\includegraphics[scale=0.6]{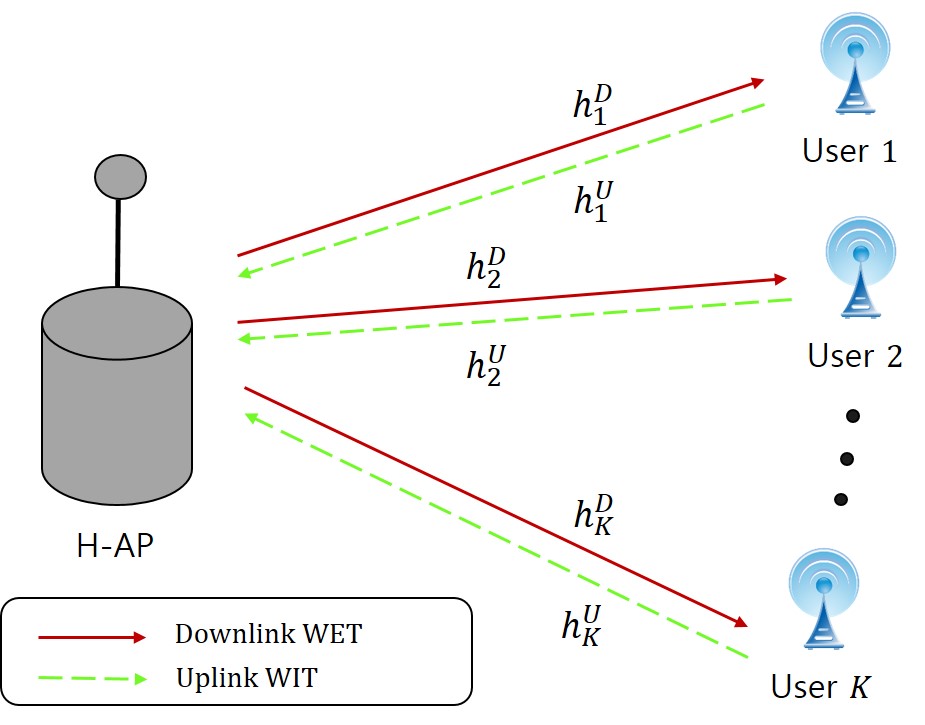}
\caption{System model: Wireless powered communication networks.}
\label{grp1}
\end{figure}

 WPCN with one H-AP and $K$ users is shown in Fig. 1, where it operates in a time division duplexing (TDD) manner with WET in the DL and WIT in the UL. H-AP broadcasts a WET signal to the users and then each user transmits its WIT signal to the H-AP by utilizing the harvested energy. All the energy and information transfers are operated via TDD over the same frequency band to attain high spectrum efficiency. 

 It is assumed that the total operation time for WPCN is normalized as one. Quasi-static fading channel is considered, that is, the DL and  the UL channel states are constant over a communication frame and it is assumed that each channel state is known at the H-AP. 
In the conventional WPCN, the DL signal transmits a WET signal with a fixed average power intensity during the allotted time slot and there is no transmission of uplink scheduling information. For synchronous transmission, the user time slots should be uniform and predetermined in the UL. In this paper, we propose to divide the downlink time slot into a number of subslots and to power level-modulate the WET signal and the scheduling information for each user in the UL is transmitted to the users by the power level-modulated WET signal.

 The DL and UL signal frame structure of WPCN is depicted in Fig. 2. In the conventional WPCN with assumption that the users cannot be dynamically scheduled in the UL for low cost, the DL transmission power is constant over the DL time slot $\tau^D$ and the UL transmission signals of $K$ users are transmitted over the equal time subslots ${{1-\tau^D} \over {K}}$ with power $P_{i}^U$. In the proposed WPCN, the DL energy transfer signals are transmitted over $K$ subslots with the different powers $P_{i}^D, ~i=1,2,...,K$ and the time subslot ${\tau^D}\over{K}$. The UL transmission time slot is divided into $K$ time subslots, $\tau_{1}^U, \tau_{2}^U, ... , \tau_{K}^U$ and each user transmits its information signal to H-AP with transmit power $P_{i}^U$ in the subslot $\tau_i^U, ~i=1,2...,K$.

 It is assumed that the H-AP has an energy constraint $E^D$ in a single time slot of the DL. The peak power constraint and the average power are denoted by $P_P$ and $P_A$, respectively, so that the downlink time slot is bounded as
\begin{equation}
\tau^D \leq {{E^D} \over {P_A}}.
\end{equation}

 The DL channel power gain from the H-AP to user $i$ is denoted by $h_{i}^D$. It is assumed that energy harvesting due to receiver noise is negligible compared to the sufficiently large $P_{i}^D$. Then, the amount of harvested energy $E_{i,j}$ of user $i$ in the $j$-th DL time subslot and the total harvested energy $E_i$ at user $i$ are given as 
\begin{equation}
\begin{aligned}
E_{i,j}&={\eta}h_{i}^DP_{i}^D\tau^{D}\frac{1}{K}\\
E_{i}&=\sum_{j=1}^{K}{\eta}h_{i}^DP_{j}^D\tau^{D}\frac{1}{K} \leq {\eta}h_{i}^DP_{A}\tau^{D}, ~~ i=1,2,...,K,
\end{aligned}
\end{equation}
where $\eta$ is the energy harvesting efficiency at all users. In the UL WIT phase, users transmit their information signals to the H-AP in TDMA manner. The total consumed power at user $i$ and the channel power gain from user $i$ to the H-AP are $P_{i}^U+p^c_i$ and $h_{i}^U$, respectively, where $p^c_i$ is the circuit power dissipation of user $i$. The achievable rate of user $i$ measured in nats/s/Hz is expressed as 
\begin{equation}
\begin{aligned}
R_i={\tau_{i}^U}\log_{}{(1+\frac{h_{i}^UP_{i}^U}{N_0})}, ~~ i=1,2,...,K,  \\
\end{aligned}
\end{equation}
where  $N_0$ is the one-sided power spectral density of the additive white Gaussian noise.

\vspace{10pt}

\begin{figure}
\centering
\includegraphics[scale=0.6]{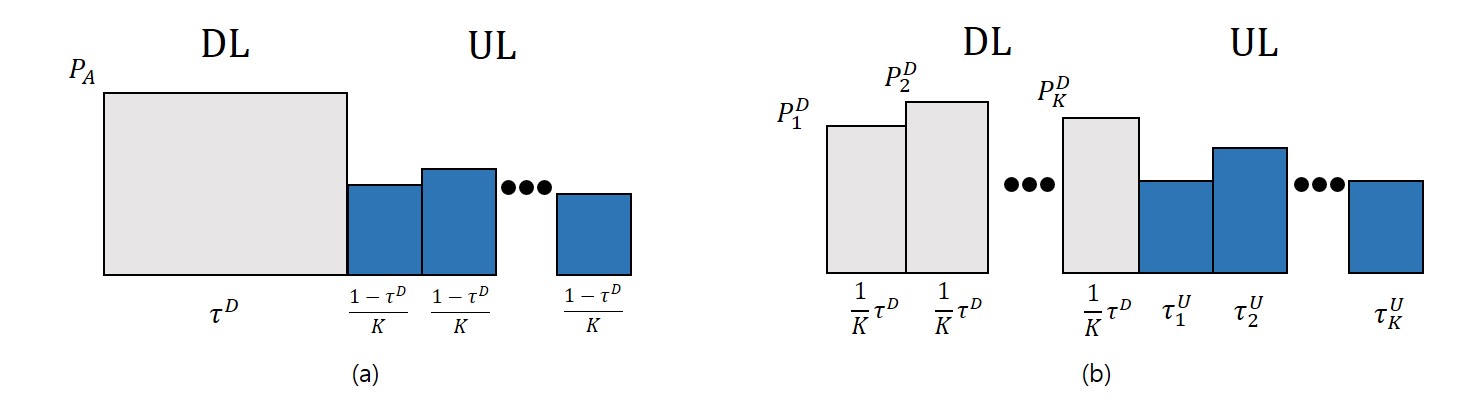}
\caption{DL and UL signal frames; (a) Conventional WPCN, (b) Proposed WPCN.}
\label{grp1}
\end{figure}

\section{Power Level Modulation Scheme and Problem Formulations with Convexity}
In this section, we propose power level modulation schemes and an optimization problem of the schemes. We optimize downlink time, uplink times, and power levels in order to maximize the total sum rate of the WPCN and prove a convexity of the problem.
\subsection{Sum Rate Maximization for Power Level Modulation Scheme with Peak Power Constraint}
 In the proposed WPCN scheme, DL time slot is uniformly divided into $K$ subslots. Each DL subslot is then equal to $\frac{\tau^D}{K}$ and H-AP transmits energy signals using power $P_i^D, i=1,2,...,K$ in the $i$-th DL subslot. Since the channels from H-AP to users are not homogeneous, the amount of harvested energies of the users are different from each other. However, the relative ratio of power levels $P_{i}^D,~i=1,2,...,K$ of the received WET signals from the H-AP is preserved for all users. Furthermore, it is assumed that users can calculate their harvested energy for each time subslot accurately. 
Thus, even though the channel condition changes, the time duration $\tau_{i}^U$ allocated to user $i$ in the UL time slot is computed using the relative ratio of $E_{i,j}$, that is,
\begin{equation}
\begin{aligned}
&\tau_{1}^U:\tau_{2}^U:...:\tau_{K}^U=E_{i,1}:E_{i,2}:...:E_{i,K},\\
&\sum_{j=1}^{K}{\tau_j^U}=1-\tau^D.
\end{aligned}
\end{equation}
Then, we have
\begin{equation}
\begin{aligned}
&\tau_{i}^U=\frac{E_{i,i}}{\sum_j^{K}{E_{i,j}}}{(1-\tau^D)}.
\end{aligned}
\end{equation}
The optimal values of $\tau^D$ and $P_i^D$ are obtained by solving the following optimization problem at H-AP.
Let $R_{sum}$ be the sum rate defined as
\begin{equation}
R_{sum}=\sum_{i=1}^{K}R_i
\end{equation}
and let $\pmb{P}=[P_1^D,P_{2}^D, ... ,P_{K}^D, P_{1}^U,P_{2}^U, ... ,P_{K}^U]$ and $\pmb{\tau}=[\tau^D,\tau_{1}^U,\tau_{2}^U, ... , \tau_{K}^U]$. Then, we can formulate the optimization problem as follows:

\begin{equation}
\begin{aligned}
\max_{\pmb {P,\tau}}{~R_{sum}}&\\
\text{such}~\text{that}~ &C1{:}~\tau^D \geq 0, \;\;\; \tau_{i}^U \geq 0, \;\;\; P_{i}^D \geq 0,\\
&C2{:}~\tau^D+\tau_{1}^U+...+\tau_{K}^U \leq 1,\\
&C3{:}~P_{1}^D+P_{2}^D+...+P_{K}^D\leq KP_A,\\
&C4{:}~P_{i}^D\leq P_P,\\
&C5{:}~{(P_{i}^U+p^c_i)}\tau_{i}^U\leq \sum_{j=1}^{K}{\eta h_{i}^DP_{j}^D{{\tau^D}\over{K}}},\\
&C6{:}~\tau_{i}^U\leq {{P_{i}^D}\over {\sum_{j=1}^{K}{P_{j}^D}}}{(1-\tau^D)}, \;\;\;\;\;\;i=1,2,...,K.
\end{aligned}
\end{equation}

The constraint $C5$ implies that the total energy consumed by the $i$-th user cannot exceed the harvested energy in the UL phase. The condition for UL time allocation of the $i$-th user through power level modulation of the DL signal is given by $C6$.

\begin{lemma}The objective function $R_{sum}$ is a concave function of $\tau^D$.
\end{lemma}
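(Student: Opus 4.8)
The plan is to collapse $R_{sum}$ into an explicit function of $\tau^D$ alone (with the downlink powers $P_1^D,\dots,P_K^D$ held fixed) and then recognize it as a perspective function composed with an affine map. The first step is to make every quantity that enters $R_{sum}$ depend on $\tau^D$ in a controlled way. From the time-allocation relation, since $E_{i,j}=\eta h_i^D P_j^D\tau^D/K$, the common factor $\eta h_i^D\tau^D/K$ cancels in the ratio, so that
\begin{equation}
\tau_{i}^U=\frac{P_{i}^D}{\sum_{j=1}^{K}P_{j}^D}\,(1-\tau^D),
\end{equation}
which is \emph{affine} in $\tau^D$. Next I would take constraint $C5$ active at the optimum (each user expends exactly its harvested energy), so the transmit energy $Q_i:=P_{i}^U\tau_{i}^U$ equals $E_i-p^c_i\tau_{i}^U$, where $E_i=\eta h_{i}^D\tau^D\big(\sum_j P_j^D\big)/K$ is linear in $\tau^D$; hence $Q_i$ is affine in $\tau^D$ as well.

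With these substitutions the per-user rate becomes
\begin{equation}
R_i=\tau_{i}^U\log\!\Big(1+\tfrac{h_{i}^U}{N_0}\,\tfrac{Q_i}{\tau_{i}^U}\Big),
\end{equation}
which is exactly the perspective of the concave function $g(x)=\log\!\big(1+\tfrac{h_{i}^U}{N_0}x\big)$ evaluated at $(Q_i,\tau_{i}^U)$. Since the perspective of a concave function is jointly concave on the region $\{\tau_{i}^U>0\}$, and since $\tau^D\mapsto(Q_i,\tau_{i}^U)$ is affine, the composition rule (concave composed with affine is concave) shows each $R_i$ is concave in $\tau^D$ on the feasible interval where $P_{i}^U\ge 0$. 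Summing over $i$ preserves concavity and yields the lemma.

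The main obstacle is the preparatory reduction rather than the concavity itself: one must argue that $C5$ is tight so that $P_{i}^U$ is pinned down by $\tau^D$, and then verify that the induced map $\tau^D\mapsto(Q_i,\tau_{i}^U)$ is genuinely affine, which is what unlocks the perspective/affine-composition shortcut. A more pedestrian route is to substitute $\tau_{i}^U$ and $P_{i}^U$ directly and differentiate $R_i$ twice in $\tau^D$; there the difficulty is the $\tau^D/(1-\tau^D)$ nonlinearity inside the logarithm, whose contribution must be shown nonpositive after combining the $-(1-\tau^D)\log(1-\tau^D)$ term with the $(1-\tau^D)\log(\text{affine})$ term. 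I would prefer the perspective argument, since it avoids that second-derivative algebra and simultaneously clarifies the domain ($\tau^D>0$, $P_{i}^U\ge 0$) on which concavity is being claimed.
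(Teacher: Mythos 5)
Your proof is correct and follows essentially the same route as the paper's: both substitute the tight energy constraint $C5$ to write each $R_i$ as the perspective of the concave function $x \mapsto \log\bigl(1+\tfrac{h_i^U}{N_0}x\bigr)$ and then invoke the fact that the perspective operation preserves concavity (citing Boyd--Vandenberghe). If anything, your version is the more careful one, since you make explicit the affine dependence of $(Q_i,\tau_i^U)$ on $\tau^D$ and the concave-composed-with-affine step, both of which the paper compresses into the loose claim that ``$R_{sum}$ is a perspective function of $R'$.''
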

\begin{proof}Let $R'=\log(1+\mu\tau^D\sum_{j=1}^{K}{P_{j}^D})$ . Since the logarithm of an affine function is concave, $R'$ is a concave function of $\tau^D$. Then, $R_{sum}$ is a perspective function of $R'$ and it preserves the concativity of original function $R'$ \cite{lse}. \end{proof}

The above optimization problem has multi variables $P_{i}^D$, $\tau^{D}$, and $\tau_{i}^U$. Since these variables are coupled in the constraints $C5$ and $C6$, it is not straightforward to find the globally optimal solution to this problem. A proper transformation of the problem can lead to the straightforward proof of the convexity of the problem. The above optimization problem can be transformed using the variable substitution proposed in \cite{tr1}. First, note that the variables $P_{i}^D$, $\tau^D$, and $\tau_{i}^U$ are positive. Then, we can take variable substitution as $P_{i}^D\triangleq \exp(p_{i}^D)$, $P_{i}^U\triangleq \exp(p_{i}^U)$, $\tau^{D}\triangleq \exp(t^{D})$, and $\tau_{i}^U\triangleq \exp(t_{i}^U)$ for all $i=1,2,...,K$. Let $\pmb{p}=[p_{1}^D,p_{2}^D, ... , p_{K}^D, p_{1}^U,p_{2}^U, ... ,p_{K}^U]$, and $\pmb{t}=[t^D,t_{1}^U,t_{2}^U, ... , t_{K}^U]$. Then, (7) can be rewritten as

\begin{equation}
\begin{aligned}
\max_{\pmb{p,t}}{~R_{sum}}&\\
\text{such~that~} &C1{:}~\exp(t^D)+\exp(t_{1}^U)+\exp(t_{2}^U)+...+\exp(t_{K}^U) \leq 1, \\
&C2{:}~\exp(p_{1}^D)+\exp(p_{2}^D)+...+\exp(p_{K}^D) \leq KP_A, \\
&C3{:}~\exp(p_{i}^D) \leq P_P,\\
&C4{:}~(\exp(p_{i}^U)+p^c_i)\exp(t_{i}^U)\leq \sum_{j=1}^{K}{{{\eta h_{i}^D}\over{K}}\exp(p_{j}^D+t^D)},\\
&C5{:}~\exp(t_{i}^U)\leq {{\exp(p_{i}^D)}\over {\sum_{j=1}^{K}{\exp(p_{j}^D)}}}{(1-\exp(t^D))}, \;\;\;\;\;\;i=1,2,...,K.
\end{aligned}
\end{equation}

\begin{theorem} The optimization problem in (7) is convex.
\end{theorem}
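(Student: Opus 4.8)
The plan is to exploit the equivalence between (7) and (8). Since $x \mapsto \exp(x)$ is a bijection from the real line onto the positive reals, the substitutions $P_i^D = \exp(p_i^D)$, $P_i^U = \exp(p_i^U)$, $\tau^D = \exp(t^D)$, $\tau_i^U = \exp(t_i^U)$ form a one-to-one change of variables that carries the (strictly positive) feasible set of (7) onto that of (8) while leaving the objective value unchanged. Hence it suffices to prove that (8) is a convex program, i.e., that $R_{sum}$ is concave in $(\pmb{p},\pmb{t})$ and that each constraint $C1$--$C5$ in (8) defines a convex set; convexity of (7) then follows.

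For the objective I would build on Lemma 1. In the new variables $R_{sum} = \sum_{i} \exp(t_i^U)\log(1 + \tfrac{h_i^U}{N_0}\exp(p_i^U))$, and the perspective identification used to prove Lemma 1 applies termwise: each summand is a perspective of the concave function obtained as the logarithm of an affine map, the perspective operation preserves concavity, and a sum of concave functions is concave. The new ingredient relative to Lemma 1 is that concavity must hold \emph{jointly} in the relevant power and time variables, not only in $\tau^D$, so I would verify the perspective structure for each $R_i$ carefully.

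For the constraints I would proceed one at a time. Constraints $C1$, $C2$, and $C3$ are immediate: each left-hand side is a sum of exponentials of affine functions, hence convex, and a sublevel set $\{f \le \text{const}\}$ of a convex $f$ is convex. The real content lies in $C4$ and $C5$, the two constraints that couple the power and time variables.

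The main obstacle will be exactly $C4$ and $C5$. In $C4$ the left side $\exp(p_i^U + t_i^U) + p_i^c\exp(t_i^U)$ is convex, but the right side $\tfrac{\eta h_i^D}{K}\exp(t^D)\sum_{j}\exp(p_j^D)$ is \emph{also} a sum of exponentials and therefore convex, so collecting everything on one side produces a difference of convex functions whose convexity is not automatic; the same difficulty reappears in $C5$, whose right side $\tfrac{\exp(p_i^D)}{\sum_{j}\exp(p_j^D)}\,(1-\exp(t^D))$ mixes a softmax-type ratio with the factor $1-\exp(t^D)$. The crux is to recast each of $C4$ and $C5$ into an equivalent convex form---for instance by dividing through and taking logarithms to expose a log-sum-exp structure, or by a perspective argument paralleling Lemma 1---rather than relying on a naive term-by-term test, which does not settle the sign of the Hessian here. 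Establishing that these reformulations are genuinely convex (equivalently, that the original sublevel sets are convex) is the step I expect to consume most of the effort, and it is where I would concentrate the detailed work.
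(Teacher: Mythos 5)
Your proposal sets up the same strategy as the paper (the exponential change of variables carrying (7) into (8), then checking the objective and each constraint of (8)), but it stops exactly where the content of the theorem begins: you never actually establish convexity of $C4$ and $C5$ of (8), nor joint concavity of the objective --- you only announce that this is where the detailed work would be concentrated. The parts you do settle (bijectivity of the substitution, convexity of $C1$--$C3$) are the routine ones. As a proof, this is a genuine gap, not a presentational one.

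Moreover, the gap cannot be closed along the lines you sketch, and your own diagnosis of $C4$ explains why. The paper disposes of $C4$ by taking logarithms and asserting that the right-hand side becomes the linear expression $\sum_{j}(p_j^D+t^D)\log(\eta h_i^D/K)$; this silently replaces the logarithm of a sum by a sum of logarithms. In truth $\log\bigl(\sum_j \tfrac{\eta h_i^D}{K}\exp(p_j^D+t^D)\bigr)$ is itself a log-sum-exp, hence convex, so one is left with ``convex $\le$ convex,'' which, as you correctly observe, decides nothing. In fact this constraint is genuinely non-convex: fixing $p_i^U$, $t_i^U$, $t^D$, it reads $\mathrm{const}\le\sum_j\exp(p_j^D)$, a superlevel set of a convex function; with $K=2$ the points $(p_1^D,p_2^D)=(0,-2M)$ and $(-2M,0)$ satisfy it while their midpoint $(-M,-M)$ violates it for large $M$, and this configuration can be embedded in the full feasible set of (8) by making $C4$ active at both endpoints. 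So for $C4$ the obstacle you flag is not effort but substance: it is a posynomial-$\le$-posynomial constraint, precisely the case the geometric-programming transformation cannot convexify, and no ``detailed work'' along your (or the paper's) route will prove it convex. By contrast, your divide-and-reformulate idea does rescue $C5$: dividing the paper's (10) by $\exp(p_i^D)$ gives $\sum_j\exp(t_i^U+p_j^D-p_i^D)+\exp(t^D)\le1$, a sum of exponentials of affine functions bounded by a constant, hence convex --- this, rather than the paper's remark that both sides are convex, is the correct justification. Finally, the termwise perspective claim for the objective would not survive the verification you promise: $\exp(t_i^U)\log(1+\tfrac{h_i^U}{N_0}\exp(p_i^U))$ has positive second derivative in $t_i^U$, so it is not concave; concavity of the throughput requires the energy substitution $E_i=P_i^U\tau_i^U$, yielding the true perspective $\tau_i^U\log(1+\tfrac{h_i^U E_i}{N_0\tau_i^U})$, a step neither you nor the paper performs.
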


\begin{proof} The objective function $R_{sum}$ is concave and the constraints C1, C2, and C3 are convex. For convenience of proof, we take the logarithm on both sides of C4 and then we have
\begin{equation}
\log(\exp(p_{i}^U+t_{i}^U)+p^c_i\exp(t_{i}^U))\leq\sum_{j=1}^{K}{(p_{j}^D+t^D)\log({{\eta h_{i}^D}\over{K}}}).
\end{equation}
 The right-hand side is a linear sum of the variables $P_j^D,t^D$ multiplied by a constant $\log({{\eta h_{i}^D}\over{K}})$ and the left-hand-side is a log-sum-exp function that has been proved to be convex in \cite{lse}. Finally the inequality in C5 can be rewritten as 
\begin{equation}
\exp(t_{i}^U+p_{1}^D)+\exp(t_{i}^U+p_{2}^D)+...+\exp(t_{i}^U+p_{K}^D)+\exp(t^D+p_{i}^D)\leq \exp(p_{i}^D).
\end{equation}
The right-hand side is convex and the left-hand side is a linear sum of convex function and thus it is also convex.
\end{proof}
The convexity of sum rate maximization is proved and thus the maximum value of sum rate can be numerically obtained.

\subsection{Modified Sum Rate Maximization for Power Level Modulation Scheme with Reduced Dynamic Range}
 In general, maximal sum rate with peak power constraint C3 is lower than that with no peak power constraint in the power level modulation for uplink scheduling. Now, we introduce reduced dynamic range for the power level modulation in H-AP as 
\begin{equation}
\begin{aligned}
&P_A(1-\alpha) \leq P_i^* \leq P_P,  ~~ i=1,2,...K, ~~ 0\leq \alpha \leq 1 \\
&P_i^*=P_i^D-P_A(1-\alpha),
\end{aligned}
\end{equation}
where $\alpha$ denotes the dynamic range index. 

Fig. 3 shows the power level modulation with reduced dynamic range, where $\alpha$ is known to all users. Thus, $\alpha=1$ means that there is no limitation of dynamic range as in (7) and  $\alpha=0$ means the conventional WPCN with constant power level.
\begin{figure}
\centering
\includegraphics[scale=0.5]{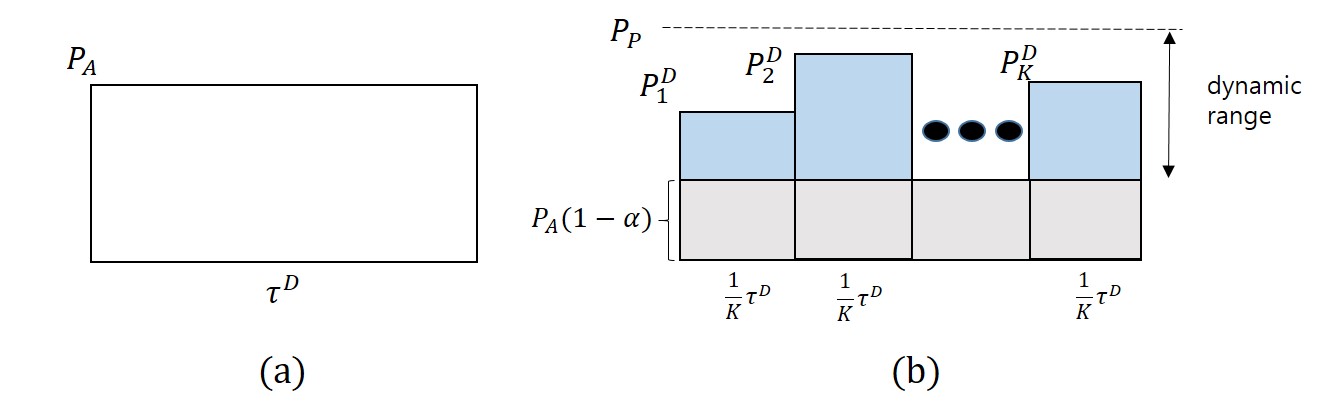}
\caption{Downlink WET signals; (a) Conventional WPCN, (b) Proposed WPCN with reduced dynamic range.}
\label{fig_systemmodel1}
\end{figure}

The following two propositions give upper bounds on $\alpha$ and $K$.

\begin{proposition}The condition for dyamic range index $\alpha$ such that it is not affected by C4 is given as
\begin{equation}
\begin{aligned}
\alpha \leq {{{{P_P}\over{P_A}}-1}\over{K-1}}.
\end{aligned}
\end{equation}
\end{proposition}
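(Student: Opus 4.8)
The plan is to read the phrase ``not affected by C4'' as the statement that the peak power constraint $C4{:}~P_i^D\leq P_P$ becomes redundant: once the reduced dynamic range of (12) is imposed, I want the floor on each power together with the average power budget to already guarantee $P_i^D\leq P_P$ for every $i$, so that C4 can never be the binding constraint. Establishing the proposition then reduces to bounding the largest value a single downlink power can take under the remaining constraints and asking when that bound stays below $P_P$.

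First I would isolate the two constraints that control the range of each $P_i^D$. The reduced dynamic range (12) imposes the floor $P_i^D\geq P_A(1-\alpha)$ on every subslot, while $C3{:}~\sum_{j=1}^{K}P_j^D\leq KP_A$ caps the total downlink power. To test when C4 is inactive, I would maximize an arbitrary single power $P_i^D$ subject to these constraints. The maximum is attained by the extreme allocation that pushes subslot $i$ as high as possible while holding the other $K-1$ subslots at their floor $P_A(1-\alpha)$ and saturating the budget C3, which gives
\begin{equation}
P_i^D\big|_{\max}=KP_A-(K-1)P_A(1-\alpha)=P_A\big[1+(K-1)\alpha\big].
\end{equation}
Imposing that this largest attainable power not exceed the peak level, $P_A[1+(K-1)\alpha]\leq P_P$, and solving for $\alpha$ reproduces the claimed bound $\alpha\leq\big(\tfrac{P_P}{P_A}-1\big)/(K-1)$; the endpoints $\alpha=0$ and $\alpha=\big(\tfrac{P_P}{P_A}-1\big)/(K-1)$ correspond, respectively, to the constant-power case and to the largest range for which C4 just becomes tight.

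The argument is short, and I do not expect a genuine analytic obstacle. The one step that needs care is justifying the extremal configuration: the peak constraint is stressed most precisely when a single subslot absorbs all of the budget left over after the remaining subslots are held at the dynamic-range floor, and one should verify that this allocation is feasible (the floors sum to $KP_A(1-\alpha)\leq KP_A$, so the leftover budget $KP_A\alpha$ is nonnegative and available to the distinguished subslot). Everything after identifying this worst case is an elementary rearrangement, so the real content lies in the redundancy interpretation of ``not affected by C4'' and the selection of the binding allocation.
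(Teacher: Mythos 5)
Your proof is correct and takes essentially the same route as the paper: the paper also considers the extremal allocation in which one subslot absorbs the entire leftover budget while the others sit at the floor $(1-\alpha)P_A$ (motivated there as the case where a single user's channel is far better than the rest), arriving at the same maximum $P_A[1+(K-1)\alpha] = \alpha P_A K + (1-\alpha)P_A$ and then imposing $\leq P_P$. Your version makes the feasibility of the extremal configuration and the redundancy reading of ``not affected by C4'' explicit, which the paper leaves implicit, but the argument is the same.
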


\begin{proof} Consider the case that the H-AP allocates all UL time to a single user $i$ because its channel is extremely  better than other users' channels. The transmission power $P_i^D$ takes its maximum possible value in the DL time slot corresponding to the user and the transmission power $P^D_i$ of the user $i$ in the $i$-th time slot becomes $\alpha P_A K+(1-\alpha)P_A$. Then, the condition for not being affected by constraint C5 is $P^D_i \leq P_P.$ Thus, (12) can be derived. 
\end{proof}

\begin{proposition} For the case that dynamic range index, peak power, and average power constraints are given, the number of users that is not affected by C4 is also given as
\begin{equation}
\begin{aligned}
K  \leq \lfloor {{{{P_P}\over{P_A}}-1}\over{\alpha}}+1\rfloor.
\end{aligned}
\end{equation}
\end{proposition}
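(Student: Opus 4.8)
The plan is to recognize that Proposition 2 is the same peak-power condition that underlies Proposition 1, merely solved for $K$ in place of $\alpha$. First I would reuse the extremal configuration already isolated in the proof of Proposition 1: when the H-AP concentrates the entire uplink allocation on a single user $i$ whose downlink channel dominates, the power assigned to that user's downlink subslot attains its largest admissible value $\alpha K P_A + (1-\alpha)P_A$. Feasibility under the peak-power constraint C4 then reduces to the single inequality
\begin{equation}
\alpha K P_A + (1-\alpha)P_A \leq P_P .
\end{equation}

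Next I would rearrange this inequality to isolate $K$. Dividing by $P_A$ gives $\alpha K + (1-\alpha) \leq \frac{P_P}{P_A}$, hence $\alpha(K-1) \leq \frac{P_P}{P_A}-1$. In the nontrivial regime $\alpha > 0$ (the boundary case $\alpha = 0$ corresponds to the conventional constant-power WPCN, for which the bound is vacuous), division by $\alpha$ preserves the inequality direction and yields $K - 1 \leq \frac{P_P/P_A - 1}{\alpha}$, that is, $K \leq \frac{P_P/P_A - 1}{\alpha}+1$. This is exactly the real-valued content of the bound (12) of Proposition 1, now read as a restriction on $K$ rather than on $\alpha$.

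Finally, since the number of users $K$ must be a positive integer while the right-hand side is in general non-integer, the largest feasible $K$ is obtained by taking the floor, giving $K \leq \lfloor \frac{P_P/P_A - 1}{\alpha}+1 \rfloor$, which is (13). The argument is purely algebraic, so I anticipate no substantive obstacle; the only points requiring care are the sign condition that justifies dividing by $\alpha$ and the integer-rounding step, where the floor must be applied to the whole expression so that the resulting bound is simultaneously valid and tight.
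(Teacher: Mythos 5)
Your proposal is correct and follows exactly the route the paper intends: the paper's own proof of Proposition 2 is just the remark that it "can be done similarly to Proposition~1," i.e., reuse the single-dominant-user extremal configuration with peak subslot power $\alpha K P_A + (1-\alpha)P_A \leq P_P$ and solve for $K$ instead of $\alpha$. Your write-up simply makes explicit the details the paper leaves implicit (the $\alpha>0$ sign condition and the integer-rounding/floor step), which is a faithful completion of the same argument.
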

\begin{proof} Proof can be done similarly to $Proposition~1$.
\end{proof}

Similar to the case in Section 3. A, each user can determine the uplink time $\tau_i^U$ based on the ratio of $E_{i,j}$ and $\alpha$, that is,
\begin{equation}
\begin{aligned}
&\tau_{1}^U:\tau_{2}^U:...:\tau_{K}^U=(E_{i,1}-E_{Ci}):(E_{i,2}-E_{Ci}):...:(E_{i,K}-E_{Ci}),\\
&\sum_{j=1}^{K}{\tau_j^U}=1-\tau^D,\\
\text{where}~~~~~~~~~ &E_{Ci}={{1-\alpha}\over{K}}\sum_{j=1}^{K}{E_{i,j}}.
\end{aligned}
\end{equation}
Then, the uplink time for user $i$ is then calculated as 
\begin{equation}
\begin{aligned}
&\tau_{i}^U=\frac{E_i-{{\sum_i^{K}{E_i}(1-\alpha)}\over{K}}}{\alpha\sum_i^{K}{E_i}}{(1-\tau^D)}.
\end{aligned}
\end{equation}

Similar to (7), the problem that H-AP should solve for optimal uplink scheduling of modified sum rate maximization for the power level modulation with reduced dynamic range is transformed into:

\begin{equation}
\begin{aligned}
\max_{\pmb {P,\tau}}{~R_{sum}}&\\
\text{such~that~} &C1{:}~ \tau^D \geq 0, \;\;\; \tau_{i}^U \geq 0, \;\;\; P_{i}^D \geq 0,\\
&C2{:}~\tau^D+\tau_{1}^U+...+\tau_{K}^U \leq 1,\\
&C3{:}~ P_{1}^D+P_{2}^D+...+P_{K}^D\leq KP_A,\\
&C4{:}~ P_{i}^D\leq P_P,\\
&C5{:}~ (P_{i}^U+p^c_i)\tau_{i}^U\leq \sum_{j=1}^{K}{\eta h_{i}^DP_{j}^D{{\tau^D}\over{K}}},\\
&C6{:}~ \tau_{i}^U\leq {{P_{i}^D-(1-\alpha)P_A}\over \alpha {\sum_{j=1}^{K}{P_{j}^D}}}{(1-\tau^D),}\;\;\;\;\;\;i=1,2,...,K.
\end{aligned}
\end{equation}

$Theorem~2:$ The problem in (16) is convex.

$Proof :$ The convexity is proved similarly to $Theorem~1$.~~~~~~~~~~~~~~~~~~~~~~~~~~~~~~~~~~~~~~~~~~~~~~~~~~~~~$\blacksquare$ 
Since the problem in (16) is convex, the maximum value of sum rate with limited dynamic range can also numerically be obtained.

\vspace{10pt}
\section{Simulation Results}

\begin{figure}
\centering
\includegraphics[scale=1.2]{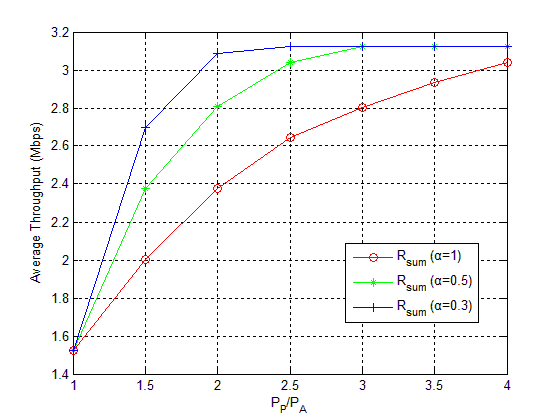}
\caption{Comparison of average sum rate of WPCN with $P_A=20$ and $K=5$.}
\label{grp1}
\end{figure}

\begin{figure}
\centering
\includegraphics[scale=1.2]{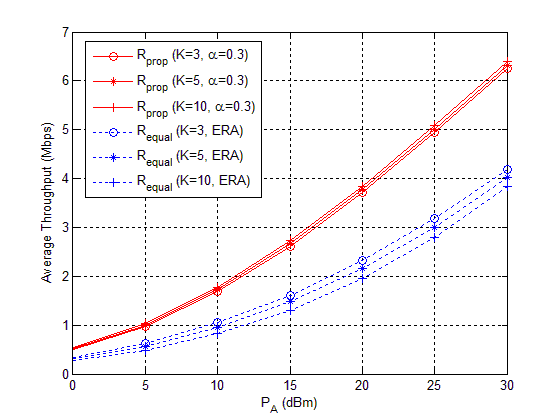}
\caption{Comparison of average sum rate of WPCN with $P_P/P_A=4, \alpha=0.3$, and distances $\{5, 10, 15\}$ for $K=3$ and $\{5,10,15,10,10,...\}$ for $K=5,10$.}
\label{grp1}
\end{figure}

\begin{figure}
\centering
\includegraphics[scale=1.2]{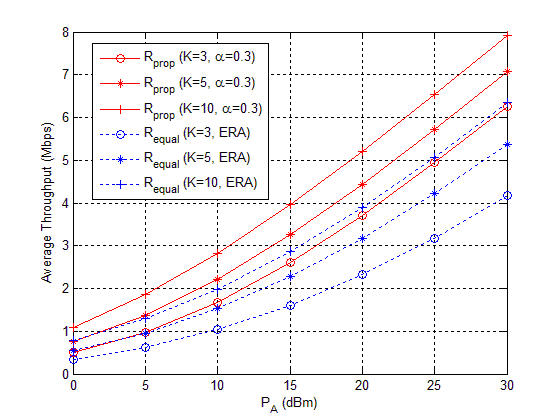}
\caption{Comparison of average sum rate of WPCN with $P_P/P_A=4, \alpha=0.3$, and distances $\{5, 10, 15\}$ for $K=3$ and $\{5,10,15,10,10,...\}$ for $K=5,10$.}
\label{grp1}
\end{figure}

 In this section, we provide numerical results to show the performance of the proposed WPCN schemes. In the network setting for simulation, we adopt the distance-dependent path loss model and assume a quasi-static Rayleigh fading channel \cite{path}. Therefore, the DL channel power gain for user $i$ is given as $h_i^D=\beta^2_iD^{-\gamma}_i,$ where $\beta^2_i$ is an exponential random variable with unit mean, $D_i$ is the distance of H-AP from user $i$, and the path loss exponent $\gamma$ is set to two.
 The channel variables are assumed to be i.i.d. at all nodes. Since the forward and reverse links are reciprocal, $h_{i}^U=h_{i}^D$ within a block time. Harvesting efficiency $\eta$ is set to 0.5 as in \cite{effi}. The noise power $N_0$ is assumed to be $-160$ [dBm]. Power consumed by circuit is set as $p^c_i=0$[W]. The simulation results are averaged over 1000 channel realization.

 In Fig. 4, we consider a five-user WPCN and set  $P_A=20$[dBm/Hz], $K=5$, and $\alpha=0.3, 0.5, 1$, where the average throughput versus the ratio of peak power and average power $P_P/P_A$ is plotted. It is shown that the proposed WPCN with lower dynamic range index, that is, small value of $\alpha$  achieves higher throughput.

 The average throughput for the proposed uplink time scheduling scheme with power level modulation is compared with the equal uplink time allocation scheme under the same network setting in Figs. 5 and 6. In Fig. 5, the distance profile of K=3 users is given as $\{5, 10, 15\}$ in meters and for $K=5, 10$, distances of $15[\text{m}]$ are added. In this setting, we confirm that regardless of the number of users, the performance of the proposed WPCN scheme is always better than that of equal resource allocation (ERA). In the proposed scheme, performance is slightly improved as the number of users increases. From (13), we have the restriction as $K \leq 11$. In the case of ERA, the performance is better when the number of users is small. This is because of the distance profile. Users from rather far distance are added, that is, the number of users with relatively poor channel power gain is increased, and UL time of users with poor channel power gain is allocated at the same ratio as users with good channel power gain. In Fig. 6, when $K = 3$, the distance is the same as the previous case, and for $K = 5, 10$, the distance of the added users is $10[\text{m}]$. In Fig. 6, the performance improvement of the proposed scheme increases, as the number of users increases, which is due to the addition of users with relatively good channel power gain compared to Fig. 5. Due to the same reason, the more the number of users in the ERA, the better the performance.

\vspace{10pt}
\section{Conclusions}
This paper proposed the power level modulation scheme for uplink time scheduling in WPCN where users are not equipped with information receiver. We formulated the sum rate maximization problems for the WPCN with the proposed power level modulation and proved the convexity of the problems. Numerical results showed that the proposed WPCN scheme significantly outperforms the conventional equal time allocation.

\vspace{10pt}

%








\end{document}